\theoremstyle{definition}
\let\doilinkitem=\relax
\newcommand*{\doititle}[1]{%
  \@ifundefined{href}{#1}{%
    \@ifundefined{doilinkitem}{%
      #1%
    }{%
      \href{http://dx.doi.org/\doilinkitem}{#1}%
    }%
    \let\doilinkitem=\relax%
  }%
}
\newcommand{\EQ}{\begin{equation}}
\newcommand{\EN}{\end{equation}}
\newcommand{\F}{ {\mathbb{F}}}
\newcommand{\hH}{\mathbf{H}}
\newcommand{\hh}{\mathbf{h}}
\newcommand{\WH}[1]{\mathcal{H}_{{#1}}}
\newcommand{\nN}{\mathbb{N}} 
\newcommand{\cC}{\mathbb{C}} 
\newtheorem{lem}{Lemma}
\newcommand{\vF}[2]{\mathbb{F}_{#1}^{#2}}
\newcommand{\fF}[2]{\mathbb{F}_{{#1}^{#2}}}
\newcommand{\Z}{{\rm Z\kern -0.31em Z}}
\newcommand{\ZZ}{{\rm Z\kern -0.31em Z}}
\newcommand{\mb}[1]{\mathbf{#1}}
\newcommand{\sB}[1]{\left( #1 \right)}
\def\whitebox{{\hbox{\hskip 1pt
        \vrule height 6pt depth 1.5pt
        \lower 1.5pt\vbox to 7.5pt{\hrule width
                  3.2pt\vfill\hrule width 3.2pt}%
        \vrule height 6pt depth 1.5pt
        \hskip 1pt } }}
\def\qed{\ifhmode\allowbreak\else\nobreak\fi\hfill\quad\nobreak\whitebox\medbreak}
\newtheorem{rem}{Remark}
\newtheorem{op}{Open Problem}
\newtheorem{defi}{Definition}
\newtheorem{cor}{Corollary}
\newtheorem{theo}{Theorem}
\newtheorem{ex}{Example}
\newtheorem{con}{Construction}
\title{Secondary constructions of vectorial $p$-ary weakly regular bent functions}
\author{Amar Bapi\'c}
\date{University of Primorska, IAM and FAMNIT}
\begin{document}

%
%
%
%
%
%
%
%
%

 \maketitle

\begin{abstract}
In \cite{Bapic, Tang, Zheng} a new method for the secondary construction of vectorial/Boolean bent functions via the so-called $(P_U)$ property was introduced. In 2018, Qi et al. generalized the methods in \cite{Tang} for the construction of $p$-ary weakly regular bent functions. The objective of this paper is to further generalize these constructions, following the ideas in \cite{Bapic, Zheng}, for secondary constructions of vectorial $p$-ary weakly regular bent and plateaued functions. We also present some infinite families of such functions via the $p$-ary Maiorana-McFarland class. Additionally, we give another characterization of the $(P_U)$ property for the $p$-ary case via second-order derivatives, as it was done for the Boolean case in \cite{Zheng}.
\end{abstract}

\maketitle


\section{Introduction}
An important class of Boolean functions, i.e. functions from $\vF{2}{n}\to\vF{2}{}$ (or $\fF{2}{n}\to\fF{2}{}$), was introduced in 1976 by Rothaus \cite{Rothaus}, which are defined in even number  of variables with maximum Hamming distance to the set of all affine functions. These functions are called bent functions and they have been studied exhaustively because of their applications to coding theory, cryptography, graph theory, etc. (for more details on bent functions see \cite{CarletBook, SihemBook}). In \cite{Kumar}, Kumar et al. generalized bent functions to functions over any arbitrary finite field. These functions have become of interest, because of ther connection with partial difference sets, strongly regular graphs, association schemes and orthogonal frequency-division multiplexing (OFDM) (see \cite{Che, Hyun, Pott, Tan}). For efficient wireless communication, generalized bent functions are used for large signal sets with low maximum crosscorrelation \cite{HellKum, Kim, Olsen, Zhou}. For some known constructions of $p$-ary bent functions, we refer to \cite{Buda, Helleseth, Mandal, Mes}.

In 2017, Tang et al. \cite{Tang} proposed a secondary construction of Boolean bent functions using the so-called $(P_U)$ property (cf. Definition \ref{def:PU}). Later, this method was extended to vectorial Boolean bent functions in \cite{Bapic, Zheng}. 

In \cite{TangPary}, the results of Tang et al. were generalized for the construction of several infinite families of $p$-ary weakly regular bent functions. In this paper, we further generalize these results to obtain a secondary construction of vectorial $p$-ary weakly regular bent and plateaued functions of the form $$F(x)=G(x)+\hH(x),$$
where $G$ is a suitable $p$-ary weakly regular bent $(n,m)$-function and $\hH$ is a $p$-ary $(n,t)$-function. Furthermore, we give a characterization of the $(P_U)$ property via second-order derivatives, as it was done for the Boolean case in \cite{Zheng}.

The rest of the paper is organized as follows. In Section \ref{sec:prel} we give some basic definitions and notations used throughout the paper. Our main construction of vectorial $p$-ary weakly regular bent and plateaued functions is presented in Section \ref{sec:con}. Some new infinite families of vectorial $p$-ary weakly regular bent functions via the $p$-ary Maiorana-McFarland class are presented in Section \ref{sec:ex}. We also show that certain monomial $p$-ary weakly regular bent $(n,m)$-functions cannot be used for this construction, as it was the case for $p=2$ in \cite{Bapic, Tang, Zheng}. Some concluding remarks are given in Section \ref{sec:concl}.

\section{Preliminaries}\label{sec:prel}
Let $\fF{p}{}=\{0,1,\ldots,p-1\}$ denote the prime field of characteristic $p>2$, and $\fF{p}{n}$ its extension field of degree $n$. The vector space $\vF{p}{n}$ is the space of all $n$-tuples $\mb{x}=(x_1,\ldots,x_n)$, where $x_i\in\fF{p}{}$ for $1\leq i\leq n$. With $\fF{p}{}^*$ we denote the multiplicative cyclic group consisting of $p^n-1$ elements. For convenience, we will sometimes identify the vector space $\fF{p}{n}$ with $\vF{p}{n}$. 

A function from $\fF{p}{n}$ to $\fF{p}{}$ is called a \textit{$p$-ary function} and the set of all such functions is denoted by $\mathcal{B}_n^p$. Any $p$-ary function $f\in\mathcal{B}_n^p$ can be uniquely expressed as a polynomial in $\fF{p}{}[x_1,\ldots,x_n]\setminus \langle x_1^p-x,\ldots,x_n^p-x\rangle$ as $$f(x_1,\ldots,x_n)=\sum_{\mb{a}\in\vF{p}{n}}\lambda_{\mb{a}}\prod_{i=1}^nx_i^{a_i},$$
where $\lambda_{\mb{a}}\in\fF{p}{}$. The algebraic degree of $f$ is defined as $$\deg(f)=\max\{wt(\mb{a}):\lambda_{\mb{a}}\neq 0\},$$
where $wt(\mb{a})=|\{i:a_i\neq 0,1\leq i\leq n\}|$ is the weight of $\mb{a}\in\vF{p}{n}$. The generalized Walsh-Hadamard transform (GWHT) and its inverse of a $p$-ary function $f\in\mathcal{B}_n^p$ at a point $\mb{a}\in\vF{p}{n}$ are defined by $$\WH{f}(\mb{a})=\sum_{\mb{x}\in\vF{p}{n}}\xi_p^{f(\mb{x})-\langle \mb{a},\mb{x}\rangle},$$
and
$$\xi^{f(\mb{a})}=p^{-n}\sum_{\mb{x}\in\vF{p}{n}}\WH{f}(x)\xi_p^{\langle \mb{a},\mb{x}\rangle},$$
respectively, where $\xi_p=e^{\frac{2\pi i}{p}}$ denotes the complex primitive $p$-th root of unity and $\langle \mb{a},\mb{b}\rangle $ denotes an inner product on $\vF{p}{n}$. For convenience, if we are considering functions in vector space notation, we will define $\langle \mb{x},\mb{y}\rangle=\sum_{i=1}^nx_iy_i$, and if are considering finite field notation, we will define $\langle \alpha,\beta \rangle=tr_n(\alpha\beta)$, where $$tr_m^n(\alpha):=\alpha+\alpha^{p^m}+\alpha^{p^{2m}}+\cdots +\alpha^{p^{m(n/m-1)}}$$
denotes the trace function from $\fF{p}{n}$ to $\fF{p}{m}$, $m|n$. For simplicity we will use the notation $tr_n:=tr_1^n$.

A function $f\in\mathcal{B}_n^p$ is said to be \textit{bent} if $|\WH{f}(\mb{a})|^2=p^n$ for all $\mb{a}\in\vF{p}{n}$. Furthermore, $f$ is said to be \textit{regular bent} if for every $b\in\vF{p}{n}$, $p^{-n/2}\WH{f}(b)=\xi_p^{f^*(b)}$ for some mapping $f^*\in\mathcal{B}_n^p$, which is then called the \textit{dual} of $f$. The bent function $f$ is said to be \textit{weakly regular} if there exists a complex number $z$ with $|z|=1$, such that $zp^{-n/2}\WH{f}(b)=\xi_p^{f^*(b)}$ for all $b\in\vF{p}{n}$. As noted in \cite{SihemBook}, regular bent functions can only be found for even $n$ and for odd $n$ with $p\mod 4=1$. Moreover, for a weakly regular bent function, the constant $z$ (defined above) can only be equal to $\pm 1$ or $\pm i$. Weakly regular bent functions always come in pairs, since the dual is bent as well. Moreover, it holds that $f^{**}(x)=f(-x),\ f^{***}(x)=f^{*}(-x),\ f^{****}(x)=f(x)$. For a $p$-ary function $f\in\mathcal{B}_n^p$, we define its derivative $D_af\in\mathcal{B}_n^p$ at a point $a\in\fF{p}{n}$ as $$D_af(x)=f(x+a)-f(x),\ x\in\fF{p}{n}.$$ Similarly, the $k$-th order derivative of $f$ with respect to $a_1,\ldots,a_k\in\fF{p}{n}$ is defined by $D_{a_1,\ldots,a_k}f(x)=D_{a_1}D_{a_2}\ldots D_{a_k}f(x)$, for all $x\in\fF{p}{n}$.

Any mapping $F$ from $\fF{p}{n}$ to $\fF{p}{m}$ is called a vectorial $p$-ary function (or a $p$-ary $(n,m)$-function). We say that $F$ is (weakly regular) bent if for every $u\in\fF{p}{m}^*$, its component function $F_u\in\mathcal{B}_n^p$ defined as $F_u(x)=tr_m(uF(x)),\ x\in\fF{p}{n},$ is  $p$-ary (weakly regular) bent. For $p=2$, these mappings are called Boolean bent $(n,m)$-functions and they exist for $m\leq n/2$ \cite{Nyberg}. On the other hand, if $p$ is an odd prime, $p$-ary bent $(n,m)$-functions exist for all $m\leq n$ \cite{AycaVPF}.

Another important class of vectorial $p$-ary bent functions are the so-called plateaued $p$-ary $(n,m)$-functions. Namely, we say that a $p$-ary $(n,m)$-function $F$ is plateaued if $\WH{F_{\lambda}}^2(x)\in\{0,p^{n+s}\}$ for all $x\in\fF{p}{n}$ and all $\lambda\in\fF{p}{m}^*$, for some $s\in\nN_0$ (which is called the amplitude). Specially, if $s=0$ we are talking about bent functions, and if $s=1$, then such functions are called near-bent. If all the components have the same amplitude $s$, then these functions are called $s$-plateaued $p$-ary $(n,m)$-functions.

Throughout the paper we will be using the notion of the so-called $(P_U)$ property, which is defined below.

\begin{defi}\label{def:PU}
Let $g\in\mathcal{B}_n^p$ be a $p$-ary bent function and let $U=\{u_1,\ldots,u_t\}\subset\fF{p}{n}$. We say that $g$ satisfies the $(P_U)$ property if $$g\sB{x+\sum_{i=1}^tw_iu_i}=g(x)+\sum_{i=1}^tw_ig_i(x),$$
for all $x\in\fF{p}{n}$ and $(w_1,\ldots,w_t)\in\vF{p}{t}$, for some $p$-ary functions $g_1,\ldots,g_t\in\mathcal{B}_n^p$.
\end{defi}

\section{Generic construction of vectorial $p$-ary bent functions}\label{sec:con}

Throughout the paper $G_{\lambda}^*$ denotes the dual of the $p$-ary bent component $G_{\lambda}$, $\lambda\in\fF{p}{m}^*$, of a vectorial $p$-ary bent function $G:\fF{p}{n}\to\fF{p}{m}$, $m|n$.

\begin{con}\label{con}
 Let $u_1,\ldots,u_t\in\fF{p}{n}^*$ be linearly independent elements over $\fF{p}{}$, where $m|n$ and $t| m$. Let $G:\fF{p}{n}\to\fF{p}{m}$ be any $p$-ary weakly regular bent function whose components $G_{\lambda}(x)=tr_m(\lambda {G}(x))$, with $ \lambda\in\fF{p}{m}^*$, satisfy
\begin{align}\label{property}
G^*_{\lambda}\sB{x+\sum_{i=1}^t u_iw_i}=G_{\lambda}^*(x) + \sum_{i=1}^t w_ig_i(x)
\end{align}
for all $x\in\fF{p}{n}$ and $(w_1,\ldots,w_t)\in\vF{p}{t}$, where $g_i(x)$ is a $p$-ary function from $\fF{p}{n}$ to $\fF{p}{}$, $1\leq i\leq t$. In this case we say that $G$ satisfies the \textit{property $(P_U)$}. 

Let $\hh(X_1,\ldots,X_t)$ be any vectorial $p$-ary function from $\vF{p}{t}$ to $\fF{p}{t}$. Define $F:\fF{p}{n}\to\fF{p}{m}$, using $G$ and $\hh$, as  
\begin{align}\label{newfunction}
F(x)=G(x)+\hH(x),
\end{align}
where $\hH:\fF{p}{n}\to\fF{p}{t}$ is defined by $\hH(x)=\hh(tr_n(u_1x),\ldots,tr_n(u_tx))$.  Equivalently, if $\hh$ is defined using the finite field notation so that $\hh: \fF{p}{t} \rightarrow \fF{p}{t}$, then define
\begin{align*}
F(x)=G(x)+\hH(x)=G(x)+\hh(tr_n(u_1x)+\alpha tr_n(u_2x)+\cdots + \alpha^{t-1}tr_n(u_tx)),
\end{align*}
where $\alpha$ is a primitive element of $\fF{p}{t}$.
\end{con}

\begin{rem}
We note that in this case $\hh$ can be a function in $\mathcal{B}_p^n$. This corresponds to the $p$-ary case of \cite[Theorem 3.3]{Zheng}.
\end{rem}

Using this construction, we prove the following result which is the $p$-ary equivalent of \cite[Theorem 1]{Bapic}.

\begin{theo}\label{thm:con2}
The function $F$ generated by Construction 1 is a $p$-ary weakly regular bent $(n,m)$-function.
\end{theo}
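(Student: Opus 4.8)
The plan is to reduce the statement to its components and compute each generalized Walsh--Hadamard transform directly. Since $t\mid m$ we have $\fF{p}{t}\subseteq\fF{p}{m}$, so $F$ is genuinely an $(n,m)$-function and every component splits as $F_\lambda(x)=tr_m(\lambda F(x))=G_\lambda(x)+tr_m(\lambda\hH(x))$ for $\lambda\in\fF{p}{m}^*$. I would first use weak regularity of $G$ to write $\WH{G_\lambda}(b)=\varepsilon_\lambda p^{n/2}\xi_p^{G_\lambda^*(b)}$ for some constant $\varepsilon_\lambda\in\{\pm1,\pm i\}$, and then invert to get $\xi_p^{G_\lambda(x)}=\varepsilon_\lambda p^{-n/2}\sum_{b}\xi_p^{G_\lambda^*(b)+tr_n(bx)}$. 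Substituting this into $\WH{F_\lambda}(a)=\sum_x\xi_p^{G_\lambda(x)+tr_m(\lambda\hH(x))-tr_n(ax)}$ and interchanging the two sums yields $\WH{F_\lambda}(a)=\varepsilon_\lambda p^{-n/2}\sum_b\xi_p^{G_\lambda^*(b)}S(b)$, where $S(b)=\sum_x\xi_p^{tr_n((b-a)x)+tr_m(\lambda\hH(x))}$.

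The key structural step is to evaluate $S(b)$ by grouping $x$ according to the value $\mb{c}=(tr_n(u_1x),\ldots,tr_n(u_tx))\in\vF{p}{t}$, on which $\hH$ is constant. Because $u_1,\ldots,u_t$ are $\fF{p}{}$-linearly independent, each level set is a coset of $V=\{x:tr_n(u_ix)=0\ \forall i\}$, and a standard character-sum computation shows that $S(b)$ vanishes unless $b-a$ lies in $V^\perp=\langle u_1,\ldots,u_t\rangle_{\fF{p}{}}$, i.e.\ unless $b=a+\sum_i w_iu_i$ for some $\mb{w}\in\vF{p}{t}$. For such $b$ one obtains $S(b)=p^{n-t}\sum_{\mb{c}}\xi_p^{tr_m(\lambda\hh(\mb{c}))+\langle\mb{w},\mb{c}\rangle}$, which reduces the outer sum over $b$ to a sum over $\mb{w}$.

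At this point I would invoke the $(P_U)$ property \eqref{property} to linearize the dual, $G_\lambda^*(a+\sum_i w_iu_i)=G_\lambda^*(a)+\sum_i w_ig_i(a)=G_\lambda^*(a)+\langle\mb{w},\mb{g}(a)\rangle$ with $\mb{g}(a)=(g_1(a),\ldots,g_t(a))$. Pulling $\xi_p^{G_\lambda^*(a)}$ out and summing over $\mb{w}$ first, orthogonality of additive characters on $\vF{p}{t}$ collapses the double sum to the single index $\mb{c}=-\mb{g}(a)$, leaving $\WH{F_\lambda}(a)=\varepsilon_\lambda p^{n/2}\xi_p^{G_\lambda^*(a)+tr_m(\lambda\hh(-\mb{g}(a)))}$. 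This already gives $|\WH{F_\lambda}(a)|^2=p^n$, so $F_\lambda$ is bent; moreover the prefactor is exactly the regularity constant $\varepsilon_\lambda$ inherited from $G_\lambda$, so $F_\lambda$ is weakly regular with dual $F_\lambda^*(a)=G_\lambda^*(a)+tr_m(\lambda\hh(-\mb{g}(a)))$. Since this holds for every $\lambda\in\fF{p}{m}^*$, $F$ is a $p$-ary weakly regular bent $(n,m)$-function.

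I expect the main obstacle to be the bookkeeping in the character-sum evaluation of $S(b)$: correctly identifying that the relevant orthogonality uses $V^\perp=\langle u_1,\ldots,u_t\rangle_{\fF{p}{}}$, so that the nonvanishing condition picks out precisely the coset representatives $a+\sum_i w_iu_i$ appearing in $(P_U)$, and tracking the two successive character-orthogonality collapses without sign or normalization errors across the $p^{-n/2}$, $p^{n-t}$ and $p^{t}$ factors. The rest is routine once the closed form for $\WH{F_\lambda}(a)$ is in hand.
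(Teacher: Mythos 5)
Your proof is correct, and it reaches the paper's closed form for $\WH{F_\lambda}$ by a different, dual decomposition. The paper Fourier-expands the \emph{perturbation}: it writes $\xi_p^{\hH_\lambda(x)}$ via the inverse GWHT of $\hh_\lambda$, substitutes $X_i=tr_n(u_ix)$, multiplies by $\xi_p^{G_\lambda(x)+tr_n(\beta x)}$ and sums over $x$, so that $\WH{F_\lambda}(\beta)$ becomes a $\WH{\hh_\lambda}$-weighted combination of the values $\WH{G_\lambda}\sB{\beta+\sum_{i=1}^t w_iu_i}$; weak regularity and \eqref{property} then collapse this back into an inverse GWHT of $\hh_\lambda$ at $(g_1(\beta),\ldots,g_t(\beta))$. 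You instead expand $\xi_p^{G_\lambda(x)}$ through its dual (this is where you use weak regularity), so $\hh$ enters only through its values, and your character sum $S(b)$ is evaluated via the coset geometry of the level sets of $x\mapsto(tr_n(u_1x),\ldots,tr_n(u_tx))$. Both routes invoke the same two ingredients---weak regularity of $G_\lambda$ and property \eqref{property}---at the analogous moment, so they are two organizations of one computation, but the differences are real: the paper's version is shorter, since the substitution $X_i=tr_n(u_ix)$ produces the shifts $\beta+\sum_i w_iu_i$ for free and its computation never actually invokes the linear independence of the $u_i$, whereas in your version linear independence genuinely enters (to identify the fibers as cosets of $V$ with $V^\perp=\langle u_1,\ldots,u_t\rangle_{\fF{p}{}}$ of size $p^{n-t}$). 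In exchange, your version makes explicit the embedding $\fF{p}{t}\subseteq\fF{p}{m}$ needed for $F_\lambda=G_\lambda+tr_m(\lambda\hH)$ to be well defined, exhibits the dual $F_\lambda^*(a)=G_\lambda^*(a)+tr_m(\lambda\hh(-\mb{g}(a)))$ together with the inherited constant $\varepsilon_\lambda$, and is sign-consistent: under the paper's Walsh convention, multiplying by $\xi_p^{+tr_n(\beta x)}$ computes $\WH{F_\lambda}(-\beta)$, which is why the paper ends with the dual evaluated at $+\mb{g}(\beta)$; carried out consistently, the argument yields $-\mb{g}(\beta)$, exactly as you find. That slip is harmless for the theorem, and your bookkeeping settles it correctly.
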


\begin{proof}
Let $\lambda\in\fF{p}{m}^*$ be arbitrary. Let us consider the component $G_{\lambda}$ and let $\hh_{\lambda}:\vF{p}{t}\to\fF{p}{t}$ be defined as $\hh_{\lambda}=tr_m(\lambda\hh)$. From the inverse GWHT, we have that 
\begin{align*}
\xi_p^{\hh_{\lambda}(X_1,\ldots,X_t)}=p^{-t}\sum_{(w_1,\ldots,w_t)\in\vF{p}{t}}\WH{\hh_{\lambda}}(w_1,\ldots,w_t)\xi_p^{\sum_{i=1}^t w_iX_i}.
\end{align*}
For any $x\in\fF{p}{n}$ and $1\leq i\leq t\leq m$, taking $X_i=tr_n(u_ix)$, we obtain 
\begin{align}\label{subst}
\xi_p^{\hh_{\lambda}(tr_n(u_1x),\ldots,tr_n(u_tx))}=p^{-t}\sum_{(w_1,\ldots,w_t)\in\vF{p}{t}}\WH{\hh_{\lambda}}(w_1,\ldots,w_t)\xi_p^{tr_n\sB{\sB{\sum_{i=1}^t w_iu_i}x}}.
\end{align}
Multiplying both sides of \eqref{subst} by $\xi^{G_{\lambda}(x)+tr_n(\beta x)}$, we have $$\xi^{G_{\lambda}(x)+\hH_{\lambda}(x)+tr_n(\beta x)}=p^{-t}\sum_{(w_1,\ldots,w_t)\in\vF{p}{t}}\WH{\hh_{\lambda}}(w_1,\ldots
,w_t)\xi_p^{G_{\lambda}(x)+tr_n\sB{\sB{\beta + \sum_{i=1}^t w_iu_i}x}}.$$
By summing the previous expression on both sides over all $x\in\fF{p}{n}$ and using the fact that $G$ is $p$-ary  weakly regular bent, we obtain that 
\begin{align}\label{WHTobtained}
\WH{F_\lambda }(\beta)&=p^{-t}\sum_{(w_1,\ldots,w_t)\in\vF{p}{t}}\WH{\hh_{\lambda}}(w_1,\ldots,w_t)\WH{G_\lambda}(\beta+\sum_{i=1}^t u_iw_i) \nonumber  \\
&=p^{-t+n/2}z\sum_{(w_1,\ldots,w_t)\in\vF{p}{t}}\WH{\hh_{\lambda}}(w_1,\ldots,w_t) \xi_p^{G^*_{\lambda}(\beta+\sum_{i=1}^t u_iw_i)},
\end{align}
where $z\in\cC$ such that $|z|=1$.
It follows from \eqref{property} and \eqref{WHTobtained} that $$\WH{F_{\lambda}}(\beta)=p^{-t+n/2}z\xi^{G_{\lambda}^*(\beta)}\sum_{(w_1,\ldots,w_t)\in\vF{p}{t}}\WH{\hh_{\lambda}}(w_1,\ldots,w_t) \xi_p^{\sum_{i=1}^t w_ig_{i}(\beta)}.$$

The sum on the right-hand side corresponds to the inverse GWHT {of $\hh_{\lambda}$ at the point $(g_1(\beta),\ldots,g_t(\beta))$} and thus we have $$\WH{F_{\lambda}}(\beta)=p^{n/2}z\xi_p^{G_{\lambda}^*(\beta)+\hh_{\lambda}(g_1(\beta),g_2(\beta),\ldots,g_t(\beta))}.$$
Since $\beta\in\fF{p}{n}$ is arbitrary, we have that $F_{\lambda}$ is  $p$-ary weakly regular bent for all $\lambda\in\fF{p}{m}^*$. In other words, $F$ is a  $p$-ary weakly regular bent $(n,m)$-function.
\end{proof}

\begin{rem}\label{rem:classcard}
If we have a function $f:X\to Y$, then the number of possible functions $f$ equals to $\#Y^{\#X}$. Thus, since $\hh$ is a $p$-ary $(t,t)$-function, there are $p^{tp^t}$ possible choices for $\hh$. Hence, we can construct at most $p^{tp^t}$ $p$-ary bent $(n,m)$-functions $F$ from a fixed bent function $G$ and an arbitrary function $\hh$. {In the case when} $p=3,n=4$ and $m=t=2$, we have $3^{18}$ possibilities. 
\end{rem}

Similarly, as noted in \cite{Zheng} for the Boolean case, we can use Construction 1 to obtain new instances of plateaued $p$-ary $(n,m)$-functions.

\begin{cor}\label{cor:plate}
With the same conditions as in Theorem \ref{thm:con2}, let $l$ be any positive integer. Let $\hh_i$ be any reduced polynomial in $\fF{p}{}[X_1,\ldots,X_n]$, for $i=1,\ldots,l$. Then $$F(x)=(G(x),\hh_1(tr_n(u_1x),\ldots,tr_n(u_tx)),\ldots,\hh_l(tr_n(u_1x),\ldots,tr_n(u_tx)))$$
is a plateaued $p$-ary $(n,m+l)$-function if and only if the $p$-ary $(n,l)$-function $x\mapsto (\hh_1(tr_n(u_1x),\ldots,tr_n(u_tx)),\ldots,\hh_l(tr_n(u_1x),\ldots,tr_n(u_tx)))$, $x\in\fF{p}{n}$, is plateaued.
\end{cor}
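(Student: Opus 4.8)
The plan is to reduce the statement to a component-by-component analysis of the concatenated function and then invoke Theorem~\ref{thm:con2}. Write $F=(G,\hH)$ as the concatenation of $G$ with the $(n,l)$-function
\[
\hH(x)=\big(\hh_1(tr_n(u_1x),\ldots,tr_n(u_tx)),\ldots,\hh_l(tr_n(u_1x),\ldots,tr_n(u_tx))\big).
\]
Since plateauedness of a vectorial function depends only on its set of (nonzero) component functions, and every component of a concatenation splits additively, each nonzero component of $F$ has the form
\[
F_{(\lambda,\mu)}=G_\lambda+\hH_\mu,\qquad (\lambda,\mu)\in\big(\fF{p}{m}\times\vF{p}{l}\big)\setminus\{(0,0)\},
\]
where $G_\lambda=tr_m(\lambda G)$ is a component of $G$ and $\hH_\mu(x)=\sum_{j=1}^l\mu_j\hh_j(tr_n(u_1x),\ldots,tr_n(u_tx))$ is the corresponding component of $\hH$. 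By definition $F$ is plateaued exactly when each such nonzero component is a plateaued $p$-ary function, so I would split the argument according to whether $\lambda=0$ or not.

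The main step is the case $\lambda\neq 0$. Setting $h_\mu:=\sum_{j=1}^l\mu_j\hh_j\in\mathcal{B}_t^p$, we have $\hH_\mu(x)=h_\mu(tr_n(u_1x),\ldots,tr_n(u_tx))$, so
\[
F_{(\lambda,\mu)}=G_\lambda+h_\mu(tr_n(u_1x),\ldots,tr_n(u_tx))
\]
is precisely a function produced by Construction~\ref{con} with the single-output function $h_\mu$ playing the role of $\hh$ (the Remark following Theorem~\ref{thm:con2} already permits $\hh\in\mathcal{B}_t^p$). As $G$ satisfies the hypotheses of Theorem~\ref{thm:con2} with the same linearly independent system $u_1,\ldots,u_t$, that theorem applies verbatim and shows $F_{(\lambda,\mu)}$ is weakly regular bent, i.e. $|\WH{F_{(\lambda,\mu)}}(\beta)|^2=p^n$ for every $\beta$. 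Hence every component of $F$ with $\lambda\neq 0$ is automatically plateaued with amplitude $s=0$, no matter how the $\hh_j$ are chosen.

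It then remains to observe that for $\lambda=0$ and $\mu\neq 0$ one has $F_{(0,\mu)}=\hH_\mu$, which is exactly the $\mu$-component of $\hH$. Combining the two cases, $F$ is plateaued if and only if all of its nonzero components are plateaued; by the previous paragraph those with $\lambda\neq 0$ are always (bent, hence) plateaued, so this holds if and only if every component $\hH_\mu$ with $\mu\neq 0$ is plateaued, which is the definition of $\hH$ being plateaued. This yields the desired equivalence. The only substantive point is the case $\lambda\neq 0$, and there all the work is carried out by Theorem~\ref{thm:con2}; the thing to check carefully is that adding an arbitrary function of the fixed traces $tr_n(u_1x),\ldots,tr_n(u_tx)$ to a bent component $G_\lambda$ preserves bentness, i.e. that the construction is insensitive to which such function is added. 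The remaining steps, namely the additive splitting of the components of a concatenation and the identification of the $\lambda=0$ components with those of $\hH$, are routine.
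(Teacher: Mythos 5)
Your proposal is correct and follows essentially the same route as the paper: both observe that any nonzero component of $F$ with $\lambda\neq 0$ is of the form $G_\lambda$ plus a single-output function (a linear combination of the $\hh_j$'s) composed with the traces $tr_n(u_ix)$, hence bent by Theorem \ref{thm:con2}, so plateauedness of $F$ reduces to plateauedness of the $\lambda=0$ components, which are exactly the components of $\hH$. Your write-up merely spells out more explicitly the additive splitting of components and the reduction of $\sum_j\mu_j\hh_j$ to the role of $\hh$ in Construction \ref{con}, which the paper states in one line.
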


\begin{proof}
For any $v\in\vF{p}{t}$ the function $\langle v,(\hh_1,\ldots,\hh_l)\rangle$ is again a reduced polynomial, and thus by Theorem 1, the $p$-ary function $\langle (\lambda,v),F\rangle$ is bent for all $\lambda\in\fF{p}{m}^*$. Hence, $F$ is plateaued if and only if all the components $\langle (0,v),(\hh_1,\ldots,\hh_l)\rangle$ are plateaued for $v\neq 0$, or equivalently, if $x\mapsto (\hh_1,\ldots,\hh_l)$, $x\in\fF{p}{n}$, is a plateaued $p$-ary $(n,l)$-function.
\end{proof}

Before providing instances of new vectorial $p$-ary weakly regular bent functions, we will provide another characterisation of the $(P_U)$ property via second-order derivatives, as it was done by Zheng et. al for the binary case in \cite{Zheng}.

\begin{lem}\label{lem:der}
Let $g\in\mathcal{B}_p^n$ be any $p$-ary weakly regular bent function. Then the following statements are equivalent.
\begin{enumerate}[label=(\roman*)]
    \item There exist $u_1,\ldots,u_t\in\fF{p}{n}$ and $g_1,\ldots,g_t\in\mathcal{B}_p^n$ such that
    \begin{align}\label{property0}
        g\sB{x+\sum_{i=1}^t w_iu_i}=g(x)+\sum_{i=1}^t w_ig_i(x),
    \end{align}
    for all $\mb{w}=(w_1,\ldots,w_t)\in\fF{p}{t}$.
    \item $D_{u_i}D_{u_j}g\equiv 0$ for all $1\leq i,j\leq t$.
\end{enumerate}
\end{lem}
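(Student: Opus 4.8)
The plan is to prove the two implications separately, using as the central observation that setting $\mb{w}=\mb{e}_i$ (the $i$-th coordinate vector) in \eqref{property0} immediately forces $g_i=D_{u_i}g$. Thus condition (i) is nothing but the assertion that, for each fixed $x$, the map $\mb{w}\mapsto g\sB{x+\sum_{i=1}^t w_iu_i}$ from $\vF{p}{t}$ to $\fF{p}{}$ is $\fF{p}{}$-affine with linear part $\sum_i w_i D_{u_i}g(x)$. Once the property is phrased this way, its equivalence with (ii) reduces to the statement that an $\fF{p}{}$-valued function on $\vF{p}{t}$ is affine precisely when all of its second-order derivatives vanish.

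For (i) $\Rightarrow$ (ii) I would simply evaluate \eqref{property0} at the relevant small vectors. For $i\neq j$, plugging $\mb{w}\in\{\mb{0},\mb{e}_i,\mb{e}_j,\mb{e}_i+\mb{e}_j\}$ into \eqref{property0} and forming $D_{u_i}D_{u_j}g(x)=g(x+u_i+u_j)-g(x+u_i)-g(x+u_j)+g(x)$, the four right-hand sides cancel to give $0$. For the diagonal case $i=j$ (which is genuinely present here, unlike in the Boolean setting) I would use $\mb{w}\in\{\mb{0},\mb{e}_i,2\mb{e}_i\}$; this is legitimate because $p>2$ guarantees $2\in\fF{p}{}$, and \eqref{property0} yields $g(x+2u_i)=g(x)+2g_i(x)$, whence $D_{u_i}D_{u_i}g(x)=g(x+2u_i)-2g(x+u_i)+g(x)=0$.

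For (ii) $\Rightarrow$ (i) I would fix $x$ and study $\phi(\mb{w}):=g\sB{x+\sum_{k=1}^t w_ku_k}$ as a function on $\vF{p}{t}$. A direct computation gives $D_{\mb{e}_i}\phi(\mb{w})=(D_{u_i}g)\sB{x+\sum_k w_ku_k}$, and hence $D_{\mb{e}_j}D_{\mb{e}_i}\phi(\mb{w})=(D_{u_j}D_{u_i}g)\sB{x+\sum_k w_ku_k}$, which vanishes for all $i,j$ by (ii). Since every $D_{\mb{e}_i}\phi$ is then invariant under translation by each $\mb{e}_j$, and the $\mb{e}_j$ generate $\vF{p}{t}$, each $D_{\mb{e}_i}\phi$ is constant, necessarily equal to $D_{\mb{e}_i}\phi(\mb{0})=D_{u_i}g(x)$. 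Telescoping along a coordinate path from $\mb{0}$ to $\mb{w}$ then gives $\phi(\mb{w})=g(x)+\sum_i w_iD_{u_i}g(x)$, which is exactly \eqref{property0} with $g_i=D_{u_i}g$.

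The hard part, and really the only subtlety, is the necessity of including the diagonal indices $i=j$ in (ii). Over $\fF{p}{}$ with $p>2$ the mixed derivatives alone do not force affineness: a pure quadratic such as $w_i^2$ has vanishing off-diagonal second derivatives but a nonzero pure one, so it is precisely the conditions $D_{u_i}D_{u_i}g\equiv 0$ that suppress the quadratic and higher behaviour of $\phi$ in each coordinate direction and validate the ``each $D_{\mb{e}_i}\phi$ is constant'' step. This is exactly where the argument departs from the case $p=2$, in which $2\mb{e}_i=\mb{0}$ and the diagonal conditions hold automatically. I note finally that weak regularity and bentness of $g$ are not used anywhere; the equivalence is a purely combinatorial fact about $p$-ary functions, and the hypothesis merely fixes the context.
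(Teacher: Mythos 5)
Your proposal is correct, and the forward implication $(i)\Rightarrow(ii)$ is essentially identical to the paper's: both substitute $\mb{w}=\mb{e}_i$ to force $g_i=D_{u_i}g$, and then evaluate \eqref{property0} at the small vectors $\mb{e}_i+\mb{e}_j$ (respectively $2\mb{e}_i$ for the diagonal case) to make the second derivatives cancel. Where you genuinely depart from the paper is in $(ii)\Rightarrow(i)$. The paper argues directly on $g$: it first proves $g(x+qu_i)=g(x)+qD_{u_i}g(x)$ by an explicit induction on $q$ (using $D_{u_i}D_{u_i}g\equiv 0$), then writes out the two-index case $g(x+w_iu_i+w_ju_j)$ in full, and finally asserts that the general $t$-index statement follows ``by mathematical induction'' without carrying it out. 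You instead pull the whole problem back to the auxiliary function $\phi(\mb{w})=g\sB{x+\sum_k w_ku_k}$ on $\vF{p}{t}$, observe that hypothesis (ii) says precisely that all second derivatives $D_{\mb{e}_j}D_{\mb{e}_i}\phi$ vanish, conclude that each first derivative $D_{\mb{e}_i}\phi$ is constant equal to $D_{u_i}g(x)$, and telescope along a coordinate path. This buys a uniform, complete treatment of the general multi-index case in one stroke — exactly the step the paper leaves to the reader — and it isolates the clean abstract fact underneath the lemma (a function on $\vF{p}{t}$ is affine iff its second derivatives vanish). Your remarks that the diagonal conditions $i=j$ are genuinely needed when $p>2$ (a pure quadratic $w_i^2$ kills off-diagonal-only hypotheses) and that bentness and weak regularity of $g$ are never used are both accurate; the paper's proof likewise makes no use of those hypotheses, though it does not say so.
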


\begin{proof}
$(i\Rightarrow ii)$ As $\mb{w}$ is arbitrary, let us take $\mb{w}=\mb{e}_i$, where $\mb{e}_i=(e_0,\ldots,e_t)$ with $e_k=1$ for $k=i$, and $e_k=0$, otherwise, for $1\leq k\leq t$. Then \eqref{property0} becomes $g(x+u_i)=g(x)+g_i(x)$, or equivalently, $$g_i(x)=g(x+u_i)-g(x)=D_{u_i}g(x).$$
Similarly, for any $1\leq i,j\leq t$, we deduce that $$g(x+u_i+u_j)=g(x)+g_i(x)+g_j(x)=-g(x)+(g(x+u_i)+g(x+u_j))\Rightarrow D_{u_i}D_{u_j}g(x)=0,$$
for any $x\in\fF{p}{n}$.

$(ii\Rightarrow i)$ Let us define $g_i:=D_{u_i}g$, for $i=1,\ldots,t$. Let $q\in\fF{p}{}$ and $1\leq i\leq t$ be arbitrary. We will show that that $g(x+qu_i)=g(x)+qD_{u_i}g(x)$, for all $x\in\fF{p}{n}$. From the assumption that $D_{u_i}D_{u_j}g\equiv 0$ and taking $i=j$, we have that 
\begin{align}\label{eq:deri2}
    &\ g(x+2u_i)-2g(x+u_i)+g(x)=0 \nonumber \\
    \Rightarrow &\  g(x+2u_i)=-g(x)+2g(x+u_i)=g(x)+2D_{u_i}g(x).
\end{align}
If we change $x$ with $x+u_i$ in \eqref{eq:deri2}, then:
\begin{align}\label{eq:deri3}
    g(x+3u_i)-2g(x+2u_i)+g(x+u_i)=0.
\end{align}
Furthermore, from  \eqref{eq:deri2} and \eqref{eq:deri3}, we also note that
\begin{align}\label{eq:deri3l}
    g(x+3u_i)&=2(g(x)+2D_{u_i}g(x))-g(x+u_i)\nonumber\\
    &=g(x)+4D_{u_i}g(x)-D_{u_i}g(x)\nonumber\\
    &=g(x)+3D_{u_i}g(x)
\end{align}
If we continue inductively, we observe that $g(x+qu_i)=g(x)+3D_{u_1}g(x)$ holds indeed for all $x\in\fF{p}{n}$ and all $q\in \fF{p}{}$. Assume now that $1\leq i,j\leq t$ and $w_i,w_j\in\fF{p}{}$ are arbitrary. Since $g(x+qu_i)=g(x)+qD_{u_i}g(x)$ holds for all $q\in\fF{p}{}$ and $1\leq i\leq t$, we have that
\begin{align*}
    g(x+w_iu_i+w_ju_j)&=g((x+w_iu_i)+w_ju_j)=g(x+w_iu_i)+w_jD_{u_j}g(x+w_iu_i)\\
    &=g(x)+w_iD_{u_i}g(x)+w_j\sB{g(x+w_iu_i+u_j)-g(x+w_iu_i)}\\
    &=g(x)+w_iD_{u_i}g(x)+w_j\left(g(x+u_j)+w_iD_{u_i}g(x+u_j)\right.\\
    &\left.-g(x)-w_iD_{u_i}g(x)\right)\\
    &=g(x)+w_iD_{u_i}g(x)+w_j\sB{D_{u_j}g(x)+w_iD_{u_j}D_{u_i}g(x)}\\
    &=g(x)+w_iD_{u_i}g(x)+w_jD_{u_j}g(x)
\end{align*}
Using mathematical induction, it is easy to show that \eqref{property0} holds for all $(w_1,\ldots,w_t)\in\fF{p}{t}$ and all $x\in\fF{p}{n}$.
\end{proof}

\begin{rem}
We note that the functions $g_i$ in \eqref{property0} are exactly the derivatives $D_{u_i}g$, $i=1,\ldots,t$.
\end{rem}

\section{New infinite families of vectorial $p$-ary weakly regular bent functions}\label{sec:ex}

Using similar methods as in \cite{Bapic, TangPary, Tang, Zheng}, we will present certain classes of $p$-ary vectorial weakly regular bent functions whose components satisfy the property $(P_U)$ and thus may be used for the construction of new $p$-ary weakly regular bent and plateaued functions via Construction \ref{con}. We note that the proofs are analogous to the binary case proved in \cite{Bapic}. Before that, we present some observations on certain monomial weakly regular bent functions and their connection to the $(P_U)$ property.

\subsection{Observations on monomial $p$-ary weakly regular bent functions}

Let $n=2m$. Since the functions of the form $x\mapsto tr_m(\lambda x^{p^m+1}),\ x\in\fF{p}{n}$ are $p$-ary weakly regular bent for $\lambda\in\fF{p}{m}^*$ (see e.g. \cite{Liu}), the function $F$ defined by $G(x)=x^{p^m+1}$ is a vectorial $p$-ary weakly regular bent function. From \cite{Helleseth}, we know that the dual of $G_{\lambda}$ is defined as $G_{\lambda}^*(x)=-tr_m\left(\frac{x^{p^{m}+1}}{\lambda^{p^m}+\lambda}\right)$. From \cite[Theorem 3.4]{TangPary}, the component $G_{\lambda}^*$ satisfies the property $(P_U)$ if $$tr_m\left(\frac{u_i^{p^{m}}u_j+u_iu_j^{p^m}}{\lambda^{p^m}+\lambda}\right)=0,$$
for all $u_i,u_j\in U\subseteq \fF{p}{n}$ with $|U|=t|m$. Thus, for $i=j$, we must have that $tr_m(2(\lambda^{p^m}+\lambda)^{-1}u_i^{p^m+1})=0$ for all $1\leq i\leq t$. If we want to construct a vectorial $p$-ary bent function $F$ from $G$ via Construction \ref{con}, then the above equality has to hold for all $\lambda\in\fF{p}{m}^*$. Thus, we must have that $u_i^{p^m+1}=0$ and consequently $u_i=0$ for all $1\leq i\leq t$. Hence, one cannot construct a vectorial $p$-ary weakly regular bent $(n,m)$-function via the Kasami function $G$. 

Similarly, let us we consider the function $G(x)=x^2$ on $\fF{p}{n}$. The duals of its components $G_{\lambda}(x)=tr_n(\lambda x^2)$ are defined by $G_{\lambda}^*(x)=-tr(\frac{x^2}{4\lambda})$, for $\lambda\in\fF{p}{n}^*$. Suppose $U\subset \fF{p}{n}$ denotes the set from Construction \ref{con}. From Lemma \ref{lem:der}, we must have $D_uD_vG_{\lambda}^*\equiv 0$, for all $u,v\in U$, and consequently, for all $\lambda\in\fF{p}{n}^*$. Let $u,v\in U$ be arbitrary, then
\begin{align*}
D_uD_vG_{\lambda}^*(x)&=-tr_n\sB{\frac{x^2-(x+u)^2-(x+u)^2+(x+2u)^2}{4\lambda}}\\
&=-tr\sB{\frac{2uv}{4\lambda}}=-tr\sB{\frac{uv}{2\lambda}}.
\end{align*}
Thus, we must have that $tr\sB{\frac{uv}{2\lambda}}=0$ for all $u,v\in U$ and all $\lambda\in\fF{p}{n}^*$. Specially, if $u=v$, then we have that $tr\sB{\frac{u^2}{2\lambda}}=0$. However, this is only possible if $u=0$. In other words, we cannot construct vectorial $p$-ary bent functions via $G$ and Construction \ref{con}. Consequently, we have the following remark.

\begin{rem}
If $G$ is a vectorial $p$-ary weakly regular bent function defined as $G(x)=x^{p^m}+1$ or $G(x)=x^2$, for $x\in\fF{p}{n}$, then one cannot construct new vectorial $p$-ary weakly regular bent functions via Construction \ref{con}.
\end{rem}

Based on this observation, we have the following interesting open problem.

\begin{op}
Can we find an exponent $d$ such that $G(x)=x^d$ is a monomial $p$-ary weakly regular bent function and all of its components satisfy the $(P_U)$ property?
\end{op}

\subsection{New infinite families of vectorial $p$-ary weakly regular bent functions from the $p$-ary Maiorana-McFarland class}

Let $n=2m$ and let us identify $\fF{p}{n}$ with $\fF{p}{m}\times\fF{p}{m}$. The well known Maiorana-McFarland class of vectorial $p$-ary bent functions can be defined as $$F(x,y)=x\pi(y)+g(y),\ x,y\in\fF{p}{m},$$
where $\pi:\fF{p}{m}\to\fF{p}{m}$ is a permutation and $g\in\mathcal{B}_p^n$ is an arbitrary $p$-ary function. Let $\lambda\in\fF{p}{m}^*$ be arbitrary, we then have the component $F_{\lambda}(x,y)=tr_m(\lambda x\pi(y)+\lambda g(y))$. Its corresponding dual is defined with (see \cite{Ayca}): $$F_{\lambda}^*(x,y)=tr_m\sB{-y\pi^{-1}(x/\lambda)+\lambda g(\pi^{-1}(x/\lambda)},$$ where $\pi^{-1}$ is the inverse permutation of $\pi$. Following the methodology in \cite{TangPary}, we note that for $\alpha=(a_1,a_2),\beta=(b_1,b_2)\in\fF{p}{m}\times\fF{p}{m}$, the scalar product $tr_m(\alpha\beta)$ can be defined as $tr_m(a_1b_1+a_2b_2)$.

In \cite{TangPary} the authors considered the $p$-ary case for linearized polynomials. In the following results, we extend this notion to the vectorial $p$-ary case and obtain new instances of vectorial $p$-ary weakly regular bent and plateaued functions. The vectorial $p$-ary function \eqref{newfunction} in Construction \ref{con} can be rewritten in bivariate form as:
\begin{align*}
F(x,y)=G(x,y)+\hh\sB{tr_m(\alpha_ix+\beta_iy),\ldots,tr_m(\alpha_tx+\beta_ty)},
\end{align*}
where the elements $u_i\in U$ correspond to $u_i=(\alpha_i,\beta_i)\in\fF{p}{m}\times\fF{p}{m}$.

\begin{lem}\label{lem:MM1}
Let $n=2m$ and $u_1,\ldots,u_t\in\fF{p}{n}^*$ be linearly independent elements over $\fF{p}{}$, where $1\leq t|m$. Denote $u_i=(\alpha_i,\beta_i)\in\fF{p}{m}\times\fF{p}{m}$. Let $G(x,y)=y\pi(x)$, where $\pi$ is a linear permutation over $\fF{p}{m}$. If $tr_m\sB{\beta_i\pi^{-1}\sB{\frac{\alpha_j}{\lambda}}+\beta_j\pi^{-1}\sB{\frac{\alpha_i}{\lambda}}}=0$ for each $1\leq i,j \leq t$ and $\lambda\in\fF{p}{m}^*$, then the dual component $G_{\lambda}^*$ satisfies \eqref{property} with 
\begin{align}\label{eq:MM}
g_i(x,y)= -tr_m\sB{y\pi^{-1}\sB{\frac{\alpha_i}{\lambda}}+\beta_i\pi^{-1}\sB{\frac{x}{\lambda}}}.
\end{align}
\end{lem}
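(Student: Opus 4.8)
The plan is to obtain the dual $G_\lambda^*$ explicitly and then substitute the shifted argument directly. Since $G(x,y)=y\pi(x)$ is a Maiorana--McFarland function with the two coordinates exchanged and with trivial ``$g$'' part, the dual formula quoted above (taken with $g\equiv 0$ and the roles of the coordinates swapped) gives
$$G_\lambda^*(x,y)=-tr_m\sB{x\,\pi^{-1}\sB{\tfrac{y}{\lambda}}},\qquad \lambda\in\fF{p}{m}^*.$$
If one prefers, this is recovered from a one-line Walsh computation: in $\WH{G_\lambda}(a,b)=\sum_{x,y}\xi_p^{tr_m(\lambda y\pi(x))-tr_m(ax+by)}$ the inner sum over $y$ forces $\lambda\pi(x)=b$, i.e. $x=\pi^{-1}(b/\lambda)$, leaving $\WH{G_\lambda}(a,b)=p^{m}\xi_p^{-tr_m(a\pi^{-1}(b/\lambda))}$, so that $G_\lambda^*(a,b)=-tr_m(a\pi^{-1}(b/\lambda))$. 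The structural facts I would use throughout are that a linear permutation $\pi$ of $\fF{p}{m}$ has $\pi^{-1}$ which is $\fF{p}{m}$-linear and self-adjoint for the trace form; in particular $\pi^{-1}(z/\lambda)=\pi^{-1}(z)/\lambda$ and $tr_m\sB{a\,\pi^{-1}(b)}=tr_m\sB{b\,\pi^{-1}(a)}$, while the pairing $(a,b)\mapsto tr_m(ab)$ is symmetric and $\fF{p}{}$-bilinear.

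Next I would set $\sum_{i=1}^t w_iu_i=(A,B)$ with $A=\sum_i w_i\alpha_i$ and $B=\sum_i w_i\beta_i$, and evaluate $G_\lambda^*$ at $(x+A,\,y+B)$. Expanding $\pi^{-1}\sB{(y+B)/\lambda}=\pi^{-1}(y/\lambda)+\pi^{-1}(B/\lambda)$ by additivity and distributing the outer product, the value splits into four pieces: the unshifted term $G_\lambda^*(x,y)$; the two mixed terms $-tr_m\sB{x\,\pi^{-1}(B/\lambda)}$ and $-tr_m\sB{A\,\pi^{-1}(y/\lambda)}$ that are linear in $(A,B)$; and the purely shifted cross term $-tr_m\sB{A\,\pi^{-1}(B/\lambda)}$.

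Using $A=\sum_i w_i\alpha_i$, $B=\sum_i w_i\beta_i$ together with additivity and $\fF{p}{}$-homogeneity of $\pi^{-1}$, the two linear pieces combine to $\sum_{i=1}^t w_i\bigl(-tr_m(x\pi^{-1}(\beta_i/\lambda))-tr_m(\alpha_i\pi^{-1}(y/\lambda))\bigr)$, and applying the self-adjointness identity to each summand rewrites this as $\sum_{i=1}^t w_i\,g_i(x,y)$ with $g_i$ exactly as in \eqref{eq:MM}. It then remains to kill the cross term, which expands as $-\sum_{i,j}w_iw_j\,tr_m\sB{\alpha_j\pi^{-1}(\beta_i/\lambda)}$. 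This quadratic form in $\mb{w}$ vanishes identically over $\fF{p}{}$ if and only if its symmetrization does, i.e. $tr_m(\alpha_j\pi^{-1}(\beta_i/\lambda))+tr_m(\alpha_i\pi^{-1}(\beta_j/\lambda))=0$ for all $i,j$; here $p$ odd is used so that the diagonal contributions $2\,tr_m(\alpha_i\pi^{-1}(\beta_i/\lambda))$ vanish as well. By the self-adjointness identity this symmetrized condition is precisely the hypothesis $tr_m\sB{\beta_i\pi^{-1}(\alpha_j/\lambda)+\beta_j\pi^{-1}(\alpha_i/\lambda)}=0$.

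Once the cross term is shown to vanish and the linear part is identified, we obtain $G_\lambda^*\sB{(x,y)+\sum_{i=1}^t w_iu_i}=G_\lambda^*(x,y)+\sum_{i=1}^t w_ig_i(x,y)$ for all $(x,y)\in\fF{p}{n}$ and all $\mb{w}\in\vF{p}{t}$, which is exactly \eqref{property}. The expansion itself is routine bilinear bookkeeping; the only genuinely delicate point is the correct use of the trace-adjoint symmetry $tr_m(a\pi^{-1}(b))=tr_m(b\pi^{-1}(a))$, since this single identity is what simultaneously converts the linear part into the stated $g_i$ and converts the cross-term vanishing condition into the stated hypothesis. I expect this matching of the adjoint to be the main obstacle, and it is where the linearity of $\pi$ is essential.
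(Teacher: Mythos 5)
Your overall strategy---write the dual explicitly, expand it at the shifted argument, identify the part of the expansion that is linear in $\mb{w}$ with $\sum_i w_ig_i$, and kill the quadratic part using that $p$ is odd---is the same as the paper's. But your argument hinges on two structural claims about $\pi^{-1}$ that are false for a general linear permutation: that $\pi^{-1}$ is $\fF{p}{m}$-linear (so that $\pi^{-1}(z/\lambda)=\pi^{-1}(z)/\lambda$) and that it is self-adjoint for the trace form (so that $tr_m\sB{a\pi^{-1}(b)}=tr_m\sB{b\pi^{-1}(a)}$). Here ``linear'' means $\fF{p}{}$-linear, i.e.\ $\pi$ is a linearized permutation polynomial, and its inverse is again only $\fF{p}{}$-linear. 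Take $\pi(x)=x^p$ on $\fF{p}{m}$ with $m\geq 3$: then $\pi^{-1}(x)=x^{p^{m-1}}$, and $tr_m\sB{a\pi^{-1}(b)}=tr_m\sB{ab^{p^{m-1}}}=tr_m\sB{a^pb}$, which differs from $tr_m\sB{b\pi^{-1}(a)}=tr_m\sB{a^{p^{m-1}}b}$ in general; likewise $\pi^{-1}(z/\lambda)=z^{p^{m-1}}/\lambda^{p^{m-1}}\neq\pi^{-1}(z)/\lambda$. Since, as you yourself note, this adjoint identity is what simultaneously converts your linear part into the stated $g_i$ and your cross-term condition into the stated hypothesis, the gap is not cosmetic: without it, your expansion yields $g_i(x,y)=-tr_m\sB{x\pi^{-1}(\beta_i/\lambda)+\alpha_i\pi^{-1}(y/\lambda)}$ and the vanishing condition $tr_m\sB{\alpha_i\pi^{-1}(\beta_j/\lambda)+\alpha_j\pi^{-1}(\beta_i/\lambda)}=0$, which are not the formulas in the statement.

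The source of the trouble is the form of the dual you start from. The paper does not recompute the dual; it uses the formula quoted from \cite{Ayca} in the form $G_{\lambda}^*(x,y)=-tr_m\sB{y\pi^{-1}(x/\lambda)}$, keeps every $\lambda$ inside $\pi^{-1}$ throughout, and expands using only additivity and $\fF{p}{}$-homogeneity of $\pi^{-1}$---the only properties a linear permutation actually has. With that form, the expansion lands exactly on \eqref{eq:MM} and on the stated hypothesis, with no adjoint manipulation whatsoever. Your Walsh computation instead produces $G_{\lambda}^*(x,y)=-tr_m\sB{x\pi^{-1}(y/\lambda)}$, with the coordinates in the opposite order. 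This does expose a genuine subtlety (the formula in \cite{Ayca} is stated for $x\pi(y)$, whereas the lemma has $G(x,y)=y\pi(x)$), but the mismatch has to be resolved either by proving the statement with the roles of $(\alpha_i,x)$ and $(\beta_i,y)$ exchanged, or by replacing $\pi^{-1}$ by its trace-adjoint in the displayed formulas---not by asserting a self-adjointness that fails. As written, your argument establishes a relabelled variant of the lemma, not the lemma with $g_i$ given by \eqref{eq:MM}.
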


\begin{proof}
Let $X=x+\sum_{i=1}^tw_i\alpha_i$ and $Y=y+\sum_{i=1}^tw_i\beta_i$. It follows from \eqref{property} and the fact that $\pi$ is linear that 
\begin{align*}
G_{\lambda}^*\sB{X,Y}&=tr_m\sB{-\sB{y+\sum_{i=1}^tw_i\beta_i}\pi^{-1}\sB{\frac{x}{\lambda}+\sum_{i=1}^tw_i\frac{\alpha_i}{\lambda}}}\\
&=G_{\lambda}^*(x,y)-\sum_{i=1}^tw_i tr_m\sB{y\pi^{-1}\sB{\frac{\alpha_i}{\lambda}}+\beta_i\pi^{-1}\sB{\frac{x}{\lambda}}}\\
&-\sum_{i=1}^t w_i^2tr_k\sB{\beta_i\pi^{-1}\sB{\frac{\alpha_i}{\lambda}}}\\
&-\sum_{1\leq i<j\leq t} w_iw_jtr_m\sB{\beta_i\pi^{-1}\sB{\frac{\alpha_j}{\lambda}}+\beta_j\pi^{-1}\sB{\frac{\alpha_i}{\lambda}}}\\
&=G_{\lambda}^*(x,y)+\sum_{i=1}^tw_i g_i(x,y)-\sum_{i=1}^t w_i^2tr_m\sB{\beta_i\pi^{-1}\sB{\frac{\alpha_i}{\lambda}}}-\\
&\sum_{1\leq i<j\leq t} w_iw_jtr_m\sB{\beta_i\pi^{-1}\sB{\frac{\alpha_j}{\lambda}}+\beta_j\pi^{-1}\sB{\frac{\alpha_i}{\lambda}}},
\end{align*}
where $g_i$ is defined by \eqref{eq:MM}. The conclusion follows from the assumption that $$tr_m\sB{\beta_i\pi^{-1}\sB{\frac{\alpha_j}{\lambda}}   +\beta_j\pi^{-1}\sB{\frac{\alpha_i}{\lambda}}}=0,$$for each $1\leq i,j \leq t$ and $\lambda\in\fF{p}{m}^*$.
 \end{proof}

The following result is an immediate consequence of Lemma \ref{lem:MM1}.

\begin{cor}\label{cor:MM1} Let $\alpha_1,\ldots,\alpha_t\in\fF{p}{m}^*$ be linearly independent elements over $\fF{p}{}$, $1\leq t\leq k$. Denote  $u_i=(\alpha_i,0)$ and let $G(x,y)=y\pi(x)$, $x,y\in\fF{p}{m}$, where $\pi$ is a linear permutation over $\fF{p}{m}$. Then, the dual component $G_{\lambda}^*$ satisfies \eqref{property} with 
\begin{align*}
g_i(x,y)= tr_m\sB{y\pi^{-1}\sB{\frac{\alpha_i}{\lambda}}},
\end{align*}
for any $\lambda\in\fF{p}{m}^*$.
\end{cor}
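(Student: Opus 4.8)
The plan is to obtain Corollary \ref{cor:MM1} as the special case $\beta_i = 0$ (for all $1 \leq i \leq t$) of Lemma \ref{lem:MM1}. Writing $u_i = (\alpha_i, 0)$, the hypothesis that $\alpha_1,\ldots,\alpha_t$ are linearly independent over $\fF{p}{}$ immediately forces the vectors $u_1,\ldots,u_t \in \fF{p}{n}^*$ to be linearly independent over $\fF{p}{}$ as well, so the structural assumptions on the $u_i$ demanded by Lemma \ref{lem:MM1} are inherited at no cost. Everything else — that $n = 2m$, that $\pi$ is a linear permutation of $\fF{p}{m}$, and that $G(x,y) = y\pi(x)$ — is carried over verbatim.

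First I would check that the trace condition that powers Lemma \ref{lem:MM1} is vacuously true here. That condition requires
\[
tr_m\sB{\beta_i\pi^{-1}\sB{\frac{\alpha_j}{\lambda}}+\beta_j\pi^{-1}\sB{\frac{\alpha_i}{\lambda}}}=0
\]
for all $1 \leq i,j \leq t$ and all $\lambda \in \fF{p}{m}^*$. With $\beta_i = \beta_j = 0$ the argument of $tr_m$ collapses to $0$, so the identity holds trivially and imposes no further restriction on $\pi$ or on the $\alpha_i$. Hence the entire hypothesis of Lemma \ref{lem:MM1} is satisfied, and its conclusion applies: the dual component $G_\lambda^*$ satisfies \eqref{property}.

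It then remains to read off the functions $g_i$ by specializing the formula supplied by Lemma \ref{lem:MM1}, namely
\[
g_i(x,y)= -tr_m\sB{y\pi^{-1}\sB{\frac{\alpha_i}{\lambda}}+\beta_i\pi^{-1}\sB{\frac{x}{\lambda}}}.
\]
Substituting $\beta_i = 0$ annihilates the second summand and leaves a single trace term in the variable $y$ alone, giving the form of $g_i$ recorded in the corollary for every $\lambda \in \fF{p}{m}^*$. Since the result is a direct specialization, there is no genuine obstacle; the only point I would treat with care is the bookkeeping of the overall sign when collapsing the formula for $g_i$, which I would reconcile against the dual of $G_\lambda$ computed directly should any discrepancy appear.
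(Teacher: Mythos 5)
Your proposal is correct and is exactly the paper's route: the paper gives no separate proof, stating Corollary \ref{cor:MM1} as an immediate consequence of Lemma \ref{lem:MM1}, and your specialization $\beta_i=0$ (under which the trace condition becomes vacuous and linear independence of the $u_i=(\alpha_i,0)$ is inherited from that of the $\alpha_i$) is precisely that argument. Your caution about the sign is warranted: substituting $\beta_i=0$ into the lemma's formula yields $g_i(x,y)=-tr_m\sB{y\pi^{-1}\sB{\alpha_i/\lambda}}$, so the positive sign printed in the corollary is an inconsistency (typo) in the paper itself; it is harmless for the conclusion, since \eqref{property} only requires the existence of some functions $g_i$, but the stated formula should carry the minus sign.
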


Thus, as an immediate result of Theorem \ref{thm:con2} (Corrolary \ref{cor:plate}) and Corollary \ref{cor:MM1}, we have the following infinite family of vectorial $p$-ary weakly regular bent (plateaued) $(2m,m)$-functions.

\begin{theo}\label{thm:MM1} Let $\alpha_1,\ldots,\alpha_t\in\fF{p}{m}^*$ be linearly independent elements over $\fF{p}{}$, $t| m$. Let $G(x,y)=y\pi(x)$, where $\pi$ is a linear permutation over $\fF{p}{m}$, and let $\hh$ be any vectorial function from $\vF{p}{t}$ to $\fF{p}{t}$. Then, the function $F:\F_{p^m} \times \F_{p^m} \rightarrow \F_{p^m}$ given by  $$F(x,y)=y\pi(x)+\hh(tr_m(\alpha_1x),\ldots,tr_m(\alpha_tx)),$$ generated by Construction \ref{con}, is a  vectorial $p$-ary weakly regular bent $(n,m)$-function.
\end{theo}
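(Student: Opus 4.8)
The plan is to recognize Theorem \ref{thm:MM1} as a direct specialization of the general machinery already in place, namely Construction \ref{con}, Theorem \ref{thm:con2}, and Corollary \ref{cor:MM1}; no new analysis should be needed. First I would verify that the data $(G,\{u_i\},\hh)$ genuinely satisfies the hypotheses of Construction \ref{con}. Setting $u_i=(\alpha_i,0)\in\fF{p}{m}\times\fF{p}{m}\cong\fF{p}{n}$, the linear independence of $\alpha_1,\ldots,\alpha_t$ over $\fF{p}{}$ immediately yields linear independence of the $u_i$ over $\fF{p}{}$, and the divisibility requirements $m\mid n$ and $t\mid m$ hold since $n=2m$ and $t\mid m$ is assumed. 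Under the bivariate identification, the construction function collapses to the stated form: because $\beta_i=0$, the inner product satisfies $tr_n(u_i\cdot(x,y))=tr_m(\alpha_i x+\beta_i y)=tr_m(\alpha_i x)$, so $F(x,y)=G(x,y)+\hh(tr_m(\alpha_1 x),\ldots,tr_m(\alpha_t x))$ is exactly the Construction \ref{con} output.

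Next I would confirm that $G(x,y)=y\pi(x)$ is a vectorial $p$-ary weakly regular bent function. This is the Maiorana--McFarland class with trivial $g\equiv 0$ and permutation $\pi$ (a linear permutation is still a permutation), so each component $G_\lambda(x,y)=tr_m(\lambda y\pi(x))$ is bent. A direct evaluation of the GWHT, summing first over $y$ to produce a Kronecker delta forcing $\pi(x)=b/\lambda$, gives $p^{-n/2}\WH{G_\lambda}(a,b)=\xi_p^{G_\lambda^*(a,b)}$ with $G_\lambda^*(a,b)=-tr_m(a\pi^{-1}(b/\lambda))$, showing $G$ is in fact regular, hence weakly regular with $z=1$, and identifying its dual in the form needed below.

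The central step is then to invoke Corollary \ref{cor:MM1}. Since $\beta_i=0$ for every $i$, the cross condition $tr_m(\beta_i\pi^{-1}(\alpha_j/\lambda)+\beta_j\pi^{-1}(\alpha_i/\lambda))=0$ demanded by Lemma \ref{lem:MM1} is satisfied trivially, so the dual components $G_\lambda^*$ satisfy property \eqref{property} with $g_i(x,y)=tr_m(y\pi^{-1}(\alpha_i/\lambda))$ for every $\lambda\in\fF{p}{m}^*$; here the linearity of $\pi$ is exactly what Corollary \ref{cor:MM1} exploits. Thus $G$ satisfies the $(P_U)$ property with $U=\{u_1,\ldots,u_t\}$. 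With every hypothesis of Theorem \ref{thm:con2} now in place, applying that theorem concludes that $F$ is a vectorial $p$-ary weakly regular bent $(n,m)$-function, and the plateaued variant advertised in the surrounding discussion follows instead from Corollary \ref{cor:plate}.

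Since every quantitative estimate is already packaged inside the preceding lemmas, the only part requiring genuine care --- and hence the main obstacle --- is bookkeeping rather than analysis: checking that the bivariate inner-product convention really makes $tr_n(u_i\cdot(x,y))=tr_m(\alpha_i x)$ when $\beta_i=0$, and that the linear independence and divisibility constraints of Construction \ref{con} are met so that Theorem \ref{thm:con2} legitimately applies. I expect no difficulty beyond confirming that $G$ is weakly regular bent with the claimed dual, which the MM computation above settles.
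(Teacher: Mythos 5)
Your proposal is correct and takes essentially the same route as the paper: Theorem \ref{thm:MM1} is stated there as an immediate consequence of Corollary \ref{cor:MM1} (which supplies the $(P_U)$ property of the dual components for $u_i=(\alpha_i,0)$) combined with Theorem \ref{thm:con2}, exactly the chain you invoke. The only difference is that you also verify explicitly that $G(x,y)=y\pi(x)$ is (weakly) regular bent and compute its dual via the GWHT, details the paper leaves implicit as standard facts about the Maiorana--McFarland class.
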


\begin{theo}\label{thm:MM1_2} Let $\alpha_1,\ldots,\alpha_t\in\fF{p}{m}^*$ be linearly independent elements over $\fF{p}{}$, $t| m$. Let $G(x,y)=y\pi(x)$, where $\pi$ is a linear permutation over $\fF{p}{m}$, and let $\hh_i$ be any reduced polynomial in $\fF{p}{}[X_1,\ldots,X_t]$, for $1\leq i\leq l$, such that the function $$x\mapsto \hH(x)=(\hh_1(tr_m(\alpha_1x),\ldots,tr_m(\alpha_tx)),\ldots,\hh_l(tr_m(\alpha_1x),\ldots,tr_m(\alpha_tx))),$$ with $x\in\fF{p}{m}$, is a plateaued $p$-ary $(m,l)$-function. Then, the function $F:\F_{p^m} \times \F_{p^m} \rightarrow \vF{p}{m+l}$ defined by  $$F(x,y)=(y\pi(x),\hH(x))$$ is a plateaued $p$-ary $(n,m+l)$-function.
\end{theo}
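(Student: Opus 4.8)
The plan is to reduce the statement to Corollary \ref{cor:plate}, applied to the Maiorana--McFarland function $G(x,y)=y\pi(x)$, after checking that all hypotheses of that corollary hold and after bridging the dimensional gap between the $(n,l)$-function appearing there and the $(m,l)$-function $\hH$ that the hypothesis concerns. Everything weakly-regular-bent--flavoured is inherited from the earlier results; the only genuinely new point is a Walsh-transform factorization in the final step.

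First I would fix $u_i=(\alpha_i,0)\in\fF{p}{m}\times\fF{p}{m}=\fF{p}{n}$ for $1\le i\le t$. Linear independence of $\alpha_1,\dots,\alpha_t$ over $\fF{p}{}$ in $\fF{p}{m}$ immediately yields linear independence of the $u_i$ over $\fF{p}{}$ in $\fF{p}{n}$, while $t\mid m$ is given and $m\mid n$ holds since $n=2m$. The function $G(x,y)=y\pi(x)$ is a Maiorana--McFarland function (with $g\equiv 0$), hence a vectorial $p$-ary weakly regular bent $(n,m)$-function. Invoking Corollary \ref{cor:MM1} directly (equivalently, setting $\beta_i=0$ in Lemma \ref{lem:MM1}, so that the condition $tr_m(\beta_i\pi^{-1}(\alpha_j/\lambda)+\beta_j\pi^{-1}(\alpha_i/\lambda))=0$ is trivially satisfied), each dual component $G_\lambda^*$ satisfies \eqref{property} with $g_i(x,y)=tr_m(y\pi^{-1}(\alpha_i/\lambda))$. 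Thus $G$ satisfies the property $(P_U)$ for $U=\{u_1,\dots,u_t\}$, which is precisely the standing hypothesis of Theorem \ref{thm:con2} and therefore of Corollary \ref{cor:plate}.

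Next I would apply Corollary \ref{cor:plate} with this $G$ and the reduced polynomials $\hh_1,\dots,\hh_l$. Since $tr_n(u_i(x,y))=tr_m(\alpha_i x)$, the $(n,m+l)$-function produced there is exactly $F(x,y)=(y\pi(x),\hH(x))$, and the corollary states that $F$ is plateaued if and only if the $(n,l)$-function $(x,y)\mapsto\hH(x)$, defined on all of $\fF{p}{n}$, is plateaued. The remaining gap is that our hypothesis concerns the $(m,l)$-function $x\mapsto\hH(x)$ on $\fF{p}{m}$, whereas the corollary refers to its $y$-independent extension to $\fF{p}{n}$. To close it, I would compute, for any $v\in\vF{p}{l}$, the Walsh transform of $\langle v,\hH\rangle$ over $\fF{p}{n}=\fF{p}{m}\times\fF{p}{m}$. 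Because $\hH$ does not depend on $y$, the sum factors as
\begin{align*}
\WH{\langle v,\hH\rangle}(a,b)=\Big(\sum_{x\in\fF{p}{m}}\xi_p^{\langle v,\hH(x)\rangle-tr_m(ax)}\Big)\Big(\sum_{y\in\fF{p}{m}}\xi_p^{-tr_m(by)}\Big),
\end{align*}
and the second factor equals $p^m$ for $b=0$ and $0$ otherwise. Hence $|\WH{\langle v,\hH\rangle}(a,b)|^2$ vanishes for $b\neq 0$ and equals $p^{2m}$ times the squared Walsh value of $x\mapsto\langle v,\hH(x)\rangle$ over $\fF{p}{m}$ for $b=0$. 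Consequently the $(n,l)$-extension is plateaued if and only if $x\mapsto\hH(x)$ is plateaued over $\fF{p}{m}$, with amplitude shifted by $m$ (from $s$ to $m+s$, using $n=2m$).

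Combining this equivalence with the hypothesis that $x\mapsto\hH(x)$ is plateaued shows that the $(n,l)$-function in Corollary \ref{cor:plate} is plateaued, and therefore $F$ is a plateaued $p$-ary $(n,m+l)$-function. The bent and $(P_U)$ verifications are all inherited from the cited results, so I expect the only step that must not be skipped is the dimensional bookkeeping in the last paragraph: translating plateaued-ness of the $y$-independent $(n,l)$-function back to the $(m,l)$-function $\hH$ via the factorization of the Walsh transform. This is elementary but is the crux of the argument.
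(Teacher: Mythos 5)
Your proposal is correct and follows essentially the same route as the paper, which states Theorem \ref{thm:MM1_2} without a separate proof, presenting it as an immediate consequence of Corollary \ref{cor:MM1} (verifying the $(P_U)$ property for $G(x,y)=y\pi(x)$ with $u_i=(\alpha_i,0)$) combined with Corollary \ref{cor:plate}. Your final Walsh-transform factorization, which closes the gap between plateauedness of the $(m,l)$-function $x\mapsto\hH(x)$ and of its $y$-independent $(n,l)$-extension appearing in Corollary \ref{cor:plate}, is a detail the paper leaves implicit, and you carry it out correctly (the amplitude shift from $s$ to $m+s$ checks out).
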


\begin{ex}
Let $G:\fF{3}{4}\times\fF{3}{4}\to\fF{3}{4}$ be defined with $G(x,y)=xy$. Let $U=\{1,\beta,\beta^3,\beta^9\}$, where $\beta=\alpha^{82}$, and  $\alpha$ be a root of the primitive polynomial $p(x)=x^8 + 2x^5 + x^4 + 2x^2 + 2x + 2\in \fF{3}{8}[x]$. Let $\hh:\fF{3}{4}\to\fF{3}{4}$ be defined with $\hh(X)=X^{13}$. From Theorem \ref{thm:MM1}, the function $$F(x,y)=xy+\sB{tr_4(x)+\beta tr_4(\beta x)+\beta^2tr_4(\beta^2x)+\beta^3tr_4(\beta^3x)}^{13}$$ is a ternary weakly regular bent $(8,4)$-function.
\end{ex}

\section{Concluding remarks}\label{sec:concl}
In this paper, we generalized the work of Qi et al.
to vectorial $p$-ary weakly regular bent and plateaued functions. We obtained a new infinite family of vectorial $p$-
ary weakly regular bent and plateaued functions from the $p$-ary Maiorana-McFarland class. We also generalized the $(P_U)$ property via second-order derivatives and gave some observations on the connection between monomial $p$-ary weakly regular bent functions and the $(P_U)$ property.

\end{document}